\newcommand{\qed}{$\square$}
\newcommand{\cBC}{C_\mathrm{BC}}
\newtheorem{theorem}{Theorem}[section]
\newtheorem{corollary}[theorem]{Corollary}
\newenvironment{proof}[1][Proof]{\begin{trivlist}
\item[\hskip \labelsep {\bfseries #1}]}{\end{trivlist}}
\newenvironment{definition}[1][Definition]{\begin{trivlist}
\item[\hskip \labelsep {\bfseries #1}]}{\end{trivlist}}
\title{Boolean Circuit Complexity of Regular Languages}
\author{Maris Valdats
\institute{University of Latvia\\
Faculty of Computing\\
Riga, Rai\c{n}a Bulv. 19, Latvia\\}
\email{d20416@lanet.lv}}
\begin{document}
\maketitle

\begin{abstract}
In this paper we define a new descriptional complexity measure for
Deterministic Finite Automata, BC-complexity, as an alternative to the state complexity.
We prove that for two DFAs with the same number of states BC-complexity can differ exponentially.
In some cases minimization of DFA can lead to an exponential increase in BC-complexity,
on the other hand BC-complexity of DFAs
with a large state space which are obtained by some standard constructions (determinization of NFA, language operations),
is reasonably small.
But our main result is the analogue of the "Shannon effect" for finite automata:
almost all DFAs with a fixed number of states have BC-complexity that is close to the maximum.
\end{abstract}
\label{1nod}
 State complexity of deterministic finite automata (DFA)~\cite{B70}\cite{H79} has been analyzed for more than
50 years and all this time has been
the main measure to estimate the descriptional complexity of finite automata. Minimization algorithm~\cite{H71}
for it was developed as well as methods to prove upper and lower bounds for various languages.

It is hard to
find any evidence of another complexity measure for finite automata.
Transition complexity~\cite{G05} could be one, it counts the number
of transitions, but there is not much use of it for DFAs (it is proportional to the state complexity),
it is used in the nondeterministic case.

But intuitively not all DFAs with the same number of states have the same complexity.
We try to illustrate it with the following example.

Consider a DFA that recognizes a language in the binary alphabet
which consists of words in which there is an even number of ones
among the last 1000 input letters. One can easily prove that it needs $2^{1000}$ states,
however such a DFA can easily be implemented by keeping its
state space in a 1000 bit register which remembers the last 1000 input letters.

On the other hand, consider a "random" DFA with a binary input tape and $2^{1000}$ states. There is essentially no better way
to describe it as with its state transition table which consists of $2^{1001}$ lines which (as it is widely
assumed) is more than particles in our universe.

%Although the state of this automaton can also be represented as a 1000 bit register, 
%transition function represented in any reasonable way (e.g. as a Boolean circuit) would
%still have a complexity of order $2^{1000}$.

It is easy to represent a large number of states in a compact form: $2^n$ states fit into $n$ state bits
of the state register. This is true for the "random" DFA as well. But the computation performed by the transition function
on this register can be very easy in some cases and hard in some other.
Therefore it seems natural to introduce a complexity measure for DFAs which
measures the complexity of the transition function. 

Automata with a large state space which is kept in a state register have been used before, but
not in the widest sense. One example of such a usage is FAPKC~\cite{TR09}
(Finite Automata Public Key Cryptosystem), a public key cryptosystem
developed in the 80's by Renji Tao. In FAPKC the state space of an automaton is
considered to be a vector space and the transition function is expressed as a polynomial over a finite field.

In this paper we consider the following model: we arbitrarily encode the state space into
a bit vector (state register) and express the transition function as a Boolean circuit.
The BC-complexity of the DFA is (approximately) the complexity of this circuit and this notion extends to regular
languages in a natural way.

BC-complexity was first analyzed in~\cite{V11} where it was considered for transducers. Here we define
it for DFAs what allows to extend the definition to regular languages.

The main result of this paper is the Shannon effect for the BC-complexity of regular languages:
it turns out that most of the languages have BC-complexity that is close to the maximum. To obtain it we first
estimate upper (and lower) bounds for BC-complexity compared to state complexity (Theorem.~\ref{apaksRobMinim}),
afterwards by counting argument we show that the complexity of most of the languages is around this upper bound.

Influence of state minimization to BC-complexity were analyzed already in~\cite{V11} for transducers and for DFAs
it is essentially the same: it turns out that for some regular languages BC-complexity of their minimal automaton
is much (superpolynomially) larger than BC-complexity for some other (non-minimal) DFA that recognizes it.
Finally we look how BC-complexity behaves if we do some standard constructions on automata
(determinization of an NFA, language operations).

%The paper is structured as follows. Section~\ref{2nod} discusses all the necessary prerequisites,
%Section~\ref{3nod} defines the representation of automaton with a Boolean circuit, Section~\ref{4nod}
%defines the main concept --- BC-complexity of an automaton and discusses its relation
%to state complexity. Section~\ref{6nod} is devoted to an interesting fact that minimizing state complexity
%can (superpolynomially) increase BC-complexity and Section~\ref{7nod} concludes the work with
%a discussion on a possible further research.

\section{Preliminaries}
\label{2nod}
\subsection{Finite Automata and Regular Languages}
We use a standard notion of DFA~\cite{M55}, it is a tuple $(Q, \Sigma, \delta, q_0, \tilde{Q})$, where
$Q$ is the state space,
$\Sigma$ is the input alphabet,
$\delta: \Sigma\times Q\rightarrow Q$ is the transition function,
$q_0\in Q$ is the start state and
$\tilde{Q}\subseteq Q$ is the set of accepting states.

%A finite deterministic automaton (DFA)~\cite{M55}  is a tuple $(Q, \Sigma, \delta, q_0, \tilde{Q})$, where
%\begin{enumerate}
%  \item $Q$ is the state space (a finite set)
%  \item $\Sigma$ is the input alphabet (a finite set)
%  \item $\delta: \Sigma\times Q\rightarrow Q$ is the transition function
%  \item $q_0\in Q$ is the start state
%  \item $\tilde{Q}\subseteq Q$ is a set of accepting states
%\end{enumerate}

DFA starts computation
in the state $q_0$ and in each step it reads an input letter $x\in \Sigma$  and changes its state.
If the current state of a DFA is $q\in Q$ and it reads an input letter $x\in \Sigma$ then it
moves to state $\delta(x, q)$. If after reading the input word DFA
is in a state $q\in \tilde{Q}$ then this word is accepted, otherwise it is rejected. DFA $A$ recognizes
language $L$ iff it accepts all words from this language and rejects all words not in the language.
%$\omega\in L\iff A \mbox{ accepts } \omega$.
Two DFAs that recognize the same language are called equivalent.
 
The state complexity of a DFA is the number of states in its state space $C_s(A)=|Q|$. For each DFA $A$
there is a unique minimal DFA $M(A)$ which is equivalent to $A$ and has minimal state complexity.
There is an effective minimization algorithm for finding it~\cite{B70}.

We will need the estimation of the number of DFAs with $s$ states.
Denote $\mathfrak{A}_s$ to be
the number of pairwise non-equivalent minimal DFAs with $s$ states over $k$-letter alphabet.
In \cite{S01} it is estimated to be larger
than $2^{s-1}(s-1)s^{(k-1)s}$, we will use the following reduced estimation (true for $s\geq 3$) which will be sufficient for us:
\begin{theorem}[\cite{S01}]
\label{autSkaits}
$\mathfrak{A}_s\geq 2^{s}s^{(k-1)s}$ for $s\geq 3$.
\end{theorem}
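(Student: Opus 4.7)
The plan is to derive this as an immediate weakening of the bound from \cite{S01} that is stated just before the theorem. That reference gives $\mathfrak{A}_s > 2^{s-1}(s-1) s^{(k-1)s}$, so the only real work is to compare $2^{s-1}(s-1)$ with $2^s$ and show that the former dominates once $s$ is at least $3$.

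Concretely, I would start by writing $\mathfrak{A}_s \geq 2^{s-1}(s-1) s^{(k-1)s}$, invoking \cite{S01}. Then I would factor the prefix as $2^{s-1}(s-1) = 2^s \cdot \frac{s-1}{2}$, and observe that $\frac{s-1}{2} \geq 1$ precisely when $s \geq 3$. Substituting this back yields $\mathfrak{A}_s \geq 2^s s^{(k-1)s}$, which is exactly the claim. The step $s \geq 3 \Rightarrow s - 1 \geq 2$ is where the hypothesis on $s$ is actually used; for $s = 2$ the stronger bound from \cite{S01} only gives a factor of $1$ instead of $2$, so the reduction would fail there.

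There is no real obstacle: the theorem is a bookkeeping simplification intended to give a cleaner formula for use later in the paper (presumably in the counting argument for the Shannon-type result). The only thing to be careful about is citing \cite{S01} correctly and noting the $s \geq 3$ threshold explicitly, since it is exactly the point at which the weakening becomes valid.
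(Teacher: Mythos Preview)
Your proposal is correct and matches the paper's approach exactly: the paper does not give a formal proof but simply remarks that the stated bound is a ``reduced estimation'' of the bound $\mathfrak{A}_s > 2^{s-1}(s-1)s^{(k-1)s}$ from \cite{S01}, valid for $s\geq 3$. Your derivation via $2^{s-1}(s-1)=2^s\cdot\frac{s-1}{2}\geq 2^s$ for $s\geq 3$ is precisely the intended one-line justification.
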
 

\subsection{Boolean circuits}

%We will use Boolean circuits to represent transition function of DFA as well as the characteristic function of $\tilde{Q}$.
%Here we define them
%and discuss some basic properties. We will restrict our attention to those circuits that are in the
%standard base ($\&$, $\vee$, $\neg$).
We will use the standard notion of a Boolean circuit
and restrict our attention to circuits in the standard base ($\&$, $\vee$, $\neg$).
%A Boolean circuit is a directed acyclic graph with indegree for any vertex at most 2. The nodes of indegree 0 are called inputs, and are labeled with a variable $x_i$
%or with a constant 0 or 1. The nodes of indegree $k\in\{1, 2\}$ are called gates and are labeled with Boolean functions:
%\begin{enumerate}
%\item $\neg$ if $k=1$
%\item $\&$ or $\vee$ if $k=2$
%\end{enumerate}
% Some of the nodes are designated as output nodes and are labeled with output letters $y_1,\dots, y_m$.
The size of the circuit $C(F)$ is
the number of gates plus the number of outputs of the circuit $F$.
Boolean circuit $F$ with $n$ inputs and $m$ outputs represents a Boolean function $(y_1,\dots, y_m)=F(x_1,\dots, x_n)$ in a natural way.
%Every node
%computes from its input(s) the function that it is labeled with --- Boolean AND ($\&$), OR ($\vee$) or NOT ($\neg$).
%The value of $m$ output nodes is the result of the computation.

Each function $f:\{0, 1\}^n\rightarrow\{0, 1\}^m$ can be represented by a Boolean circuit in (infinitely many) different ways.
The complexity of this function $C(f)$ is the size of the smallest circuit that represents this function.
%Thus, the complexity  of a Boolean function is the number of output variables plus the number of gates.

We will also need a formula for the upper bound of the number of different Boolean circuits with a given complexity $C$.
Denote $N(n, m, C)$ to be the number of circuits with $n$ input variables,
$m$ output variables and no more than $C$ gates, that correspond to different Boolean
functions. Then:
\begin{theorem}
\label{shemuSkaits}
$$N(n, m, C)\leq 9^{C+n}(C+n)^{C+m}$$
\end{theorem}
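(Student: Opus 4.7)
The plan is a direct counting argument: since each Boolean function is realized by at least one circuit, $N(n,m,C)$ is bounded by the number of syntactically distinct circuits over $\{\&,\vee,\neg\}$ with $n$ inputs, $m$ outputs, and at most $C$ gates. First I would fix a topological ordering on the $C$ gates so that together with the inputs there are $W=C+n$ indexable wires; then a circuit is fully specified by giving, for each gate, its operation (one of three basis types) together with its $\le 2$ input-wire references from $W$ values, and, for each of the $m$ outputs, the wire (one of $W$) that carries it.

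A direct accounting gives at most $3W^2$ choices per gate and $W^m$ for the outputs. To match the sharper bound in the statement, I would repackage the description by associating with each of the $W$ wires a constant-sized ``local descriptor'' of at most $9$ values that records the wire's role (an input marker, or a gate type together with one of its input references, plus a symbol for an ``absent'' input when the gate is unary). This contributes a factor $9^W=9^{C+n}$; the remaining input reference for each gate together with the references for the $m$ outputs then contribute $W^{C+m}=(C+n)^{C+m}$. Multiplying the two factors yields the claimed bound.

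The main obstacle is verifying that this repackaged encoding covers every circuit and that the alphabet size $9$ really suffices for the per-wire descriptors, rather than the naive $3W^2$ per gate. Since only an upper bound is required, generous overcounting is permitted, so after the amortization step the remaining combinatorial inequalities are elementary.
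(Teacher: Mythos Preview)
Your counting setup is correct and matches the paper's: with $W=C+n$ labelled wires, the naive description gives at most $(3W^2)^C\cdot W^m$ circuits. The gap is precisely in your ``repackaging'' step. You claim each wire can carry a descriptor from an alphabet of size~$9$ that records ``a gate type together with one of its input references.'' But an input reference is a wire index, one of $W$ values; it cannot be absorbed into a constant-size alphabet. With the encoding as you describe it you still need \emph{two} wire references per gate, yielding a factor $W^{2C+m}$ rather than $W^{C+m}$, and $3^C W^{2C+m}$ is genuinely larger than the stated bound (for fixed $n$ and large $C$ it exceeds $9^{C+n}(C+n)^{C+m}$ by a factor of order $(C/9)^C$). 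You yourself flag this as ``the main obstacle,'' and indeed it is not resolved.

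The paper closes this gap not by a per-wire encoding but by noting that $N(n,m,C)$ counts \emph{functions}, so every labelling of the $C$ gates describes the same function and the naive count overcounts by a factor of $C!$. Using $C!>(C/3)^C$ turns $(3W^2)^C/C!$ into $9^C(W/C)^C W^C=9^C(1+n/C)^C W^C$, and then $(1+n/C)^C<e^n<9^n$ gives the factor $9^{C+n}W^C$; multiplying by $W^m$ for the outputs yields the theorem. This division by $C!$ is the missing idea in your proposal. (An equivalent device that stays closer to your packaging intuition: fix a canonical gate ordering so that the sequence of first-input indices is nondecreasing; then that sequence is one of $\binom{W+C-1}{C}\le 2^{2C+n}$ possibilities, which fits inside $9^{C+n}$, and only the $C$ second inputs and $m$ outputs remain as free wire references.)
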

\begin{proof}
Assign to inputs numbers from 1 to $n$, and numbers from $n+1$ to $n+C$ to the gates. Each gate is characterised
with its two inputs (at most $(n+C)^2$ possibilities) and type (AND, OR, NOT, 3 possibilities). There are no more
than $(n+C)^m$ ways how to assign outputs of the circuit and each circuit is counted $C!$ times, one for each
numbering of gates. Therefore the total number of circuits can be estimated as:
$$N(n, m, C)<\frac{(3(C+n)^2)^C\cdot (C+n)^m}{C!}<9^C(1+\frac{n}{C})^C(C+n)^{C+m},$$
here we have used, that $C!>C^C/3^C$ for all $C$.

Further, as $(1+1/x)^x<e<9$ for arbitrary $x>0$, then
$$(1+\frac{n}{C})^C=((1+\frac{n}{C})^\frac{C}{n})^n<9^n$$
from where the result follows.
\qed
\end{proof}

A classical result about Boolean functions states that most of functions $f:\{0, 1\}^n\rightarrow\{0, 1\}^m$ have
approximately the same circuit complexity which is close to maximum. This property is called Shannon effect.
%The lower bound can be obtained from Theorem~\ref{shemuSkaits}, upper bound is found by a smart
%construction of a circuit for an arbitrary Boolean function.
\begin{theorem}[\cite{L84}]
\label{ShTeorema}
For any Boolean function $f:\{0, 1\}^n\rightarrow \{0, 1\}^m$
$$C(f)\lesssim \frac{m2^n}{n+\log m},$$
For almost all  Boolean functions $f:\{0, 1\}^n\rightarrow \{0, 1\}^m$
$$C(f)\gtrsim \frac{m2^n}{n+\log m}.$$
\end{theorem}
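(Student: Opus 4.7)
The plan is to prove the two halves separately. The lower bound for almost all $f$ is a direct counting argument using Theorem~\ref{shemuSkaits}, and I would carry it out first as it is essentially routine. The universal upper bound is Lupanov's classical construction, whose parameter tuning is the real technical work.

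For the lower bound, there are exactly $2^{m 2^n}$ functions $f:\{0,1\}^n\to\{0,1\}^m$. Fix $\varepsilon>0$ and set $C=(1-\varepsilon)\frac{m 2^n}{n+\log m}$. Taking logarithms in Theorem~\ref{shemuSkaits} yields
$$\log N(n,m,C)\le (C+n)\log 9+(C+m)\log(C+n).$$
The first summand is $O(C)=o(m2^n)$ as soon as $n+\log m\to\infty$. For the second, $\log(C+n)=n+\log m-\log(n+\log m)+O(1)$, so $(C+m)\log(C+n)\le (1-\varepsilon)(1-o(1))\,m2^n$. Combining, $N(n,m,C)\le 2^{(1-\varepsilon/2)m2^n}$ for all large enough parameters, so the fraction of functions realisable by a circuit of size $\le C$ tends to $0$, which is exactly the claim $C(f)\gtrsim m2^n/(n+\log m)$ for almost all $f$.

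For the upper bound I would use the Lupanov partition. Split the $n$ input variables as a prefix of length $n-q$ and a suffix of length $q$; stacking all $m$ outputs gives a ``truth matrix'' with $2^{n-q}$ rows and $m\cdot 2^q$ columns. Cut each column into vertical blocks of size $k$, so inside each block the column carries one of only $2^k$ possible bit-patterns. Build \emph{one} shared enumerator sub-circuit that, on the prefix input, materialises all $2^k$ possible patterns simultaneously; build decoders that, per block, indicate which pattern occurs there; and combine them with multiplexers driven by the suffix bits to output the desired bits. A standard accounting gives size $O(2^{q}2^k/(q+k))$ for the enumerator, $O(m\,2^{n-q}2^k/k)$ for the decoders, and $O(m 2^q)$ for the multiplexers. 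Choosing $q\approx n-2\log n$ and $k\approx n+\log m-q-2\log(n+\log m)$ balances the three terms at the advertised value $m 2^n/(n+\log m)$.

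The main obstacle is pushing the leading constant in the upper bound down to $1$. Realising the $m$ outputs by $m$ independent Shannon circuits gives only $m\cdot 2^n/n$; the improvement from $n$ to $n+\log m$ in the denominator comes from \emph{sharing} the enumerator across all output coordinates, and that amortisation is exactly the non-trivial step in Lupanov's construction. Once $q$ and $k$ are tuned so that the three contributions balance to within lower-order terms, the rest is bookkeeping, and the lower-bound half (which is what will be used in the remainder of the paper) only needs the clean version of Theorem~\ref{shemuSkaits} already proved above.
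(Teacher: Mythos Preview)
The paper does not prove this theorem at all: it is stated as a citation of a classical result (the reference \texttt{[L84]} in the theorem header) and no proof is given in the body of the paper. There is therefore no ``paper's own proof'' against which to compare your proposal.

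That said, your sketch is the standard route to this classical result and is essentially correct. The lower-bound half is a clean application of the counting bound of Theorem~\ref{shemuSkaits}, and your log computation is right: with $C=(1-\varepsilon)m2^n/(n+\log m)$ one gets $\log N(n,m,C)\le (1-\varepsilon+o(1))m2^n$, which is $o$ of the total count $2^{m2^n}$. The upper-bound half is exactly Lupanov's decomposition; your description of sharing the enumerator across the $m$ output coordinates to convert the denominator from $n$ to $n+\log m$ is the correct intuition, and the parameter choice you give is the usual one. The only caveat is that your size estimates for the three pieces (enumerator, decoders, multiplexers) are stated as $O(\cdot)$ while the theorem needs leading constant $1$; getting that constant requires the full Lupanov accounting rather than just the asymptotic shape, so if you were to write this out in detail that is where the care would go.
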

Here and further $\log=\log_2$ and we use the notation
$$f(n)\lesssim g(n) \Leftrightarrow \lim_{n\rightarrow \infty}\frac{f(n)}{g(n)}\leq 1.$$

\section{Encodings and Representations of a DFA}
\label{3nod}
Classical representations of automata are table forms or state transition diagrams. They are essentially
the same, a state diagram can be thought of as a visualization of a table form. Table form lists
the transition function of an automaton as a table where each line corresponds to a pair of state and input letter.
In state transition diagram each state is denoted by a circle and for each
transition $(q,x)\rightarrow q'$ an arrow is drawn from state $q$ to state $q'$ above which letter $x$ is written.

Both of these representations show each state of an automaton separately, therefore with these methods
it is not possible to effectively describe an automaton with a large number of states.

One can encode $s$ states into $\left\lceil\log(s)\right\rceil$ (or more) state bits which can be kept in a
 \textit{state register}. Also, input letters can be encoded as a bit vectors. Every automaton
has infinitely many such encodings.

The transition function in this case will take as an input a state register and
an encoded input letter, and produce a (next) state register.
It is thus a Boolean function and it is natural to represent it with a Boolean circuit.

Another question is how to represent the set of accepting states $\tilde{Q}$. We represent it
by a Boolean circuit implementing its characteristic function.
Therefore a representation of
a DFA will consist of an encoding of its state space and input alphabet and two circuits:
one for its transition function and one for the characteristic function
of the set of accepting states.
We call these circuits \textit{transition circuit} and \textit{acceptance circuit}, respectively.

An encoding $E(X)$ of a set X onto a binary string is an injective mapping $f_X:X\rightarrow \{0, 1\}^{b_X}$
where $b_X$ is the length of the encoding. As the mapping is injective then $b_X\geq\lceil\log{|X|}\rceil$.

An encoding of a DFA consists of an encoding of its input alphabet $f_\Sigma$ and an encoding of the state space $f_Q$
which we call input encoding and state encoding, respectively. Additionally for the state encoding
we ask that the start state is encoded as a string of all zeros $f_Q(q_0) = 0^{b_Q}$.

%\begin{definition}
%An encoding $E(A)$ of a DFA $A$ is a pair of injective mappings $E(A)=(f_\Sigma, f_Q)$ such that 
%$f_\Sigma:\Sigma\rightarrow\{0, 1\}^{b_\Sigma}$ maps the input alphabet to a bit vector of length $b_\Sigma$,
%$f_Q:Q\rightarrow\{0, 1\}^{b_Q}$  maps the state space to a bit vector of length $b_Q$ and $f_Q(q_0)=0^{b_Q}$.
%\end{definition}

\begin{definition}
Let $A=(Q,\Sigma,\delta,q_0,\tilde{Q})$ be a given DFA and $(f_\Sigma, f_Q)$ be its encoding.
A pair of Boolean circuits $(F, G)$ is a representation of $A$ under encoding $(f_\Sigma, f_Q)$ iff
\begin{itemize}
\item $F$ has $b_\Sigma+b_Q$ input variables and $b_Q$ output variables,
\item $G$ has $b_Q$ input variables and one output variable,
\item for all $x\in \Sigma$ and $q\in Q$ if $q'=\delta(x, q)$, then $f_Q(q')=F(f_\Sigma(x),f_Q(q))$,
\item $G(f_Q(q))=1 \iff q\in \tilde{Q}$ for all $q\in Q$.
\end{itemize}
\end{definition}

In other words, transition circuit $F$
reads encoded input $f_\Sigma(x)$ as its first $b_\Sigma$ input bits, encoded state $f_Q(q)$ as following
$b_Q$ input bits and has encoded next state $f_Q(q')$ as its $b_Q$ output bits.
Acceptance circuit $G$ reads encoded state $f_Q(q)$  and
outputs 1 as its only output bit iff $q\in\tilde{Q}$.

%A representation of a DFA $A$ is a pair $(E(A), (F, G))$ which consists of its encoding $E(A)$ and a representation
%of this encoding with  a pair of Boolean circuits $(F, G)$. Sometimes if the encoding is irrelevant we will
%call $(F, G)$ to be the representation $A$.

As noted before minimal values for $b_\Sigma$ and $b_Q$ are $\left\lceil\log(|\Sigma|)\right\rceil$
and $\left\lceil\log(|Q|)\right\rceil$ respectively,
but they can be larger as well.
Whether allowing them to be larger gives a possibility to construct smaller representations of DFAs,
is an interesting open question.

It is natural to encode the state space $Q$ with $|Q|$ lexicographically first bit strings 
of length $\lceil\log|Q|\rceil$, in such a case we will say that the state encoding is minimal.
The notion of minimal input encoding is introduced similarly.
We call an encoding of a DFA \textit{minimal encoding} if both encodings: state and input are minimal.
%Any minimal encoding has the property
%that $b_\Sigma=\left\lceil\log(|\Sigma|)\right\rceil$ and $b_Q=\left\lceil\log(|Q|)\right\rceil$.
%We will say that a representation of a language or an automaton is minimal if it is a representation of
%some of its minimal encoding.

%The condition that start state maps to all zero bit string is not too strict. If there is representation of a DFA
%in an encoding with this condition released (state $q_0$ maps to some arbitrary bit string $b_1b_2,\dots, b_n$) then
%one can easy correct it by adding negations to inputs  and outputs for which $b_i=1$, at most $2n$ negations
%have to be added to circuit $F$ (to inputs and outputs) and at most $n$ negations to circuit $G$ (to inputs only),
%here $n$ is the number of state bits.

\section{BC-complexity}
\label{4nod}
%\subsection{Definition and comparison to the state complexity}
In this section we define the main concept of this paper, BC-complexity of a DFA.
We start from the bottom:
\begin{definition}
BC-complexity of a representation of a DFA $(F,G)$ is the sum of complexities
of its transition circuit and acceptance circuit and the number of state bits:
$$\cBC((F, G)) = C(F)+C(G)+b_Q.$$
\end{definition}

The number of state bits $b_Q$ is included in the definition to avoid situation that an automaton
has a large number of states but zero BC-complexity. It is natural to assume that it costs something
to create a circuit even if it has no gates and this is one of the possibilities how to reflect
this in the definition. Another possibility would be to use the complexity of "wires"
instead of the complexity of gates for the underlying circuits, but we prefer to use the standard
complexity for the circuits.

%\begin{definition}
%BC-complexity of a DFA $A$ under encoding $(f_\Sigma, f_Q)$ is the minimal complexity
%over all its representations $R=(F, G)$:
%$$\cBC(E(A))=\min\{C(R):R \mbox{ represents } E(A)\}.$$
%\end{definition}

\begin{definition}
BC-complexity of a DFA $A$,
$\cBC(A)$, is the minimal BC-complexity of its representations:
$$\cBC(A)=\min\{\cBC((F, G)): (F, G) \mbox{ represents } A\}.$$
\end{definition}

Although the name "circuit complexity" also sounds reasonable,
we use the abbreviation "BC-complexity" to avoid confusion
with the circuit complexity of regular languages.
\begin{definition}
BC-complexity of a regular language $L$ is the minimal BC-complexity of all DFAs
that recognize $L$:
$$\cBC(L)=\min\{\cBC(A):A \mbox{ recognizes } L\}.$$
\end{definition}

First we observe that we can optimize our acceptance circuit by rearranging states. If we encode states
in such a way that all accepting states have smaller index than rejecting states (or vice versa) then the acceptance circuit
can be reduced to a comparison operation whose complexity is not greater than $4n$ where $n$ is the number of state bits.

But this is not the best optimization that can be achieved by rearranging states.
%Further we continue by finding upper and lower bounds for BC-complexity
%compared to the state complexity and showing that these bounds are closely reachable.
For the upper bound in the following Theorem~\ref{PFrobezas} different arrangement is used.

\begin{theorem}
\label{PFrobezas}
If $|\Sigma|=k\geq 2$ then for any DFA $A$ with $s$ states, 
$$\lceil\log(s)\rceil\leq \cBC(A)\lesssim (k-1)s.$$
If $|\Sigma|=1$ then for any DFA $A$ with $s$ states, 
$$\lceil\log(s)\rceil\leq \cBC(A)\lesssim \frac{s}{\log s}.$$

\end{theorem}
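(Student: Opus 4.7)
The lower bound is immediate from the definition of BC-complexity. Any representation $(F,G)$ of $A$ uses a state encoding $f_Q$ that is injective on the $s$-element set $Q$, so $b_Q\geq\lceil\log s\rceil$, whence $\cBC((F,G))=C(F)+C(G)+b_Q\geq\lceil\log s\rceil$, and taking the minimum over representations preserves this.

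For the upper bound my plan is to use the minimal encoding $b_\Sigma=\lceil\log k\rceil$, $b_Q=\lceil\log s\rceil$ and bound $C(F)$ and $C(G)$ separately. The acceptance circuit $G$ has $b_Q\approx\log s$ input bits and a single output bit, so Theorem~\ref{ShTeorema} applied with $n=\log s$, $m=1$ gives $C(G)\lesssim s/\log s$. This is $o(s)$, and dominates only in the unary case.

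For the transition circuit I would single out one ``pivot'' letter $a_1$ and lay the state encoding out along the functional graph of $\delta_{a_1}$. Every map $Q\to Q$ decomposes as a disjoint union of rho-shapes (trees rooted on disjoint cycles); numbering states first around each cycle and then outward along its trees turns $\delta_{a_1}$ into the successor $q\mapsto q+1$ except at a small set $E$ of cycle-end positions that trigger wrap-around. The increment costs $O(\log s)$, the membership test $q\in E$ is a Boolean function on $b_Q$ bits and hence bounded by $\lesssim s/\log s$ via Theorem~\ref{ShTeorema}, and the partial function from $E$ to the cycle starts is similarly controlled by Theorem~\ref{ShTeorema}, giving $C(\delta_{a_1})\lesssim s/\log s$. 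For $k=1$ this is already the whole story and yields $\cBC(A)\lesssim s/\log s$. For $k\geq 2$ I would then implement the remaining transitions $\delta_{a_2},\ldots,\delta_{a_k}$ jointly as a single partial Boolean function on $\lceil\log(k-1)\rceil+b_Q$ inputs and $b_Q$ outputs; Theorem~\ref{ShTeorema} with $2^n=(k-1)s$, $m=\log s$ yields complexity $\lesssim(k-1)s$. A multiplexer between the pivot branch and the rest, controlled by $f_\Sigma(x)$, costs $O(k\log s)=o(s)$, so $\cBC(A)\leq C(F)+C(G)+b_Q\lesssim(k-1)s$.

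The hardest step will be the claim that $C(\delta_{a_1})$ actually stays within $s/\log s$ under the rho-shape encoding: when the functional graph of $\delta_{a_1}$ contains many small components the wrap-around partial function $E\to Q$ can look nearly arbitrary, and one must verify by a careful Lupanov-style implementation that its cost remains $\lesssim s/\log s$ (where it must not spoil the leading term in the unary case) and $o(s)$ in the multi-letter case (where it must be absorbed by the $(k-1)s$ contribution from the other letters).
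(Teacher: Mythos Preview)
Your lower bound and your treatment of the $k-1$ non-pivot letters match the paper exactly. The gap is precisely where you flag it, and it is a real one.

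First a technical point about your encoding. With the numbering you describe (around each cycle first, then \emph{outward} along the attached trees), a tree vertex maps to its parent, which carries a \emph{smaller} index; so $\delta_{a_1}$ is not $q\mapsto q+1$ on tree vertices at all, and essentially every non-cycle vertex lands in $E$. A leaf-to-root path decomposition repairs this, but then $|E|$ is the number of maximal chains, which can still be $\Theta(s)$.

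More importantly, your appeal to Theorem~\ref{ShTeorema} for the wrap-around partial function $E\to Q$ does not give what you claim. That function has $b_Q\approx\log s$ input bits and $b_Q\approx\log s$ output bits, so Lupanov yields
\[
C\ \lesssim\ \frac{(\log s)\cdot s}{\log s+\log\log s}\ \sim\ s,
\]
not $s/\log s$. This is fatal in both regimes: for $k=1$ you get $\sim s$ instead of $s/\log s$, and for $k=2$ the pivot contributes $\sim s$ on top of the $(k-1)s=s$ from the other letter, giving $\sim 2s$ rather than $\sim s$. A ``more careful Lupanov-style implementation'' cannot help here, because $\delta_{a_1}$ is an arbitrary self-map of $Q$ and under a generic encoding its circuit complexity genuinely is $\sim s$.

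The paper's resolution is a structural observation you are missing, not a sharper asymptotic. It orders the chains so that all chains of the same \emph{type} $(m,j)$ (length $m$, loop-back to position $j$) occupy a contiguous block $[M(m,j),N(m,j)]$. Then ``$q$ is the last state of some chain of type $(m,j)$'' becomes the range test $M\le q\le N$ together with the congruence $q+1\equiv M\pmod m$, and the wrap is the single subtraction $q'=q-(m-j)$; all of this costs $O(\log^2 s)$ per type. The crucial count is that the number $K$ of distinct types present is $O(s^{2/3})$, since using every type up to length $u$ already consumes $\sum_{m\le u}m^2\sim u^3$ states, forcing $u=O(s^{1/3})$ and $K\le u(u+1)/2=O(s^{2/3})$. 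Hence the whole pivot circuit has size $O(s^{2/3}\log^2 s)=o(s/\log s)$. Without this type-grouping idea your outline reaches neither $(k-1)s$ for $k\ge2$ nor $s/\log s$ for $k=1$.
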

\begin{proof}
Lower bound. For any representation $(F, G)$ there are $b_Q\geq \lceil\log{s}\rceil$ state bits,
therefore BC-complexity cannot be smaller than $\lceil\log{s}\rceil$.

For upper bound if we just construct an optimal representation $(F, G)$ under
some arbitrary minimal encoding (with $\lceil\log{s}\rceil$ state bits)
then the BC-complexity of this representation according to Theorem~\ref{ShTeorema} can be estimated as
$$\cBC((F, G))\leq C(F)+C(G)+\lceil\log{s}\rceil \lesssim\frac{ks\lceil\log{s}\rceil}{\log (ks\lceil\log{s}\rceil)}
+\frac{s}{\log s}+ \lceil\log{s}\rceil\lesssim ks$$

To improve the result to $(k-1)s$ we will choose a specific minimal encoding where states are ordered in
a way that for one input letter the corresponding transition function is simple.
Denote $q$ to be the encoding of the current state, $q'$ to be the encoding of the next state and $x$ to be
the encoding of the input letter.
We split the transition circuit $F$ in two parts $F_1$ and $F_2$ where part $F_1$
computes the next state for one specific input letter $a$ and part $F_2$ does it for other $k-1$ input letters.
% (Figure \ref{F1F2}).

%\begin{figure}[htb]
%	\centering
%		\includegraphics[width=0.4\textwidth]{bildes/PFrobeza.eps}
%	\caption{Construction of an optimal transition circuit}
%	\label{F1F2}
%\end{figure}

If we look at the state transition graph for the letter $a$ then it
splits into connected components each of which has the form
$$q_1 \rightarrow q_2 \rightarrow \dots q_{m-1} \rightarrow q_m \rightarrow q_j$$
where $1\leq j \leq m$.
Each such component is uniquely defined with two numbers $m$ (the number of states in it) and $j$
(the length of the "tail"), we call $m$ to be the length of a component.
We order all these components by $m$ and $j$ lexicographically what naturally leads
to the ordering of states. Consider all components with parameters $(m, j)$ and denote by $M=M(m, j)$ the index (encoding)
of the first state of the first such component and by $N=N(m, j)$ the index of the last state of the last such component.

The transition function is $q'=q+1$ except for the last state $q_m$ of each component
for which it is $q'=q-(m-j)$. As each of these components have $m$ states then $q$
corresponds to the last state of some component iff $q+1=M \mod{m}$.

The circuit $F_1$ should compute the following function $q' = F_1(q)$:
\begin{verbatim}
q' = q+1
for all pairs (m, j)
  if M(m, j)<=q<=N(m, j) and q+1 == M(m, j) mod m:
    q' = q-(m-j)
\end{verbatim}

Here $M$ and $N$ are the boundaries within which all components with parameters $(m, j)$ are placed.
It is easy to check that circuits for subtraction $q'=q-(m-j)$ and comparison $(M\leq q\leq N)$ are of size $O(\log{s})$,
for modulo comparison $q+1=M \mod{m}$ it is of size $O(\log{s}^2)$. Therefore
the total size of the circuit $F_1$ is
$K*c\log{s}^2$
where $K$ is the number of different pairs $(m, j)$ that correspond to some components
that are present in the transition graph and $c$
is some constant. We need to estimate the maximum value of $K$.

One can easily see that maximum value of $K$ is obtained when
each component with parameters $(m, j)$ appears exactly once and all the smallest components are used.
Let $u$ be the maximum length of a component (maximal value of $m$) under the condition
that all the possible smallest components are used.
For each $m$ there are $m$ possible different types of components ($1\leq j\leq m$) therefore 
$K\leq u(u+1)/2$. 

On the other hand the total length of all components up to the length $u-1$ should be less than $s$:
$$\sum_{m=1}^{u-1} m^2=\frac{(u-1)u(2u-1)}{6}\leq s$$
whence it follows that $u\leq 2\sqrt[3]{s}$.
Therefore
$$K\leq\frac{u(u+1)}{2}\leq\frac{2\sqrt[3]{s}(2\sqrt[3]{s}+1)}{2}\leq 4s^\frac{2}{3}$$.

The size of the transition circuit $F_2$ for the other $k-1$ input letters can be estimated from Theorem~\ref{ShTeorema}:
$$C(F_2)\lesssim\frac{(k-1)s\lceil\log{s}\rceil}{\log ((k-1)s\lceil\log{s}\rceil)}\lesssim (k-1)s.$$

The size of the acceptance circuit can be estimated (from Theorem~\ref{ShTeorema}) as
$C(G)\lesssim \frac{s}{\log{s}}$.

After reordering of states we also have to ensure that the start state is 0. This
can increase the complexity of both circuits by no more than $3\log{s}$ (it is
necessary to add at most $n$ negations to the input of the transition circuit $F$, the output of the transition
circuit $F$, and the input of the acceptance circuit $G$). There are also
$\log{s}$ state bits which are included in the computation of BC-complexity.
We omit these terms of logarithmic order in the computation of BC-complexity because
asymptotically they are negligible.

The BC-complexity of the automaton therefore can be estimated as
$$\cBC(A)\leq C(F)+C(G)+b_Q\lesssim 4c(\log s)^2s^{2/3}+\frac{s}{\log s} + (k-1)s$$

If $k\geq 2$ then the dominant term of this expression is $(k-1)s$ and $\cBC(A)\lesssim (k-1)s$.
For one letter alphabet the dominant term is $s/\log{s}$, therefore $\cBC(A)\lesssim \frac{s}{\log s}$.
\qed
\end{proof}

%Upper bound (sketch of the proof). Denote $n=\lceil\log{s}\rceil$.
%If we just choose arbitrary minimal encoding (with $n$ state bits)
%and construct a pair of minimal circuits $(F, G)$ representing it,
%then the BC-complexity of this representation according to Theorem~\ref{ShTeorema} can be estimated as
%$$C(F)+C(G)\lesssim\frac{ksn}{\log (ksn)}+\frac{s}{\log s}\lesssim ks$$
%To improve the result to $(k-1)$ we will choose a specific minimal encoding where states are ordered in
%a way that for one input letter the corresponding transition function is simple.
%Choose one arbitrary input letter
%and consider the graph of the transition function for this input letter. It consists of connected components
%of the form $q_1 \rightarrow q_2 \rightarrow \dots q_{m-1} \rightarrow q_m \rightarrow q_j$,
%where $1\leq j \leq m$. These components can be arranged according to parameters $m, j$ and this induces
%a natural arrangement of the states of the DFA. The complexity of the transition function for this arrangement
%of states for this input letter is negligible therefore the total BC-complexity of the automaton decreases
%to $(k-1)s$.
%\qed
%\end{proof}

Consider language $L_n$ in binary alphabet $\Sigma=\{0, 1\}$ such that $x\in L_n$
iff $|x|=k$ and $x_{k-n+1}=1$ (the $n$-th letter from the end is "1"). The state complexity
of this language is $2^n$, one has to remember in a state register the last $n$ input letters. But the BC-complexity of it is $n$.
Circuits $F, G$ that represent the natural encoding of a DFA $A_n$ that recognizes $L_n$ have no gates,
they are shown in figure~\ref{ShiftN}. Therefore the BC-complexity of (the representation $(F, G)$ of) $A_n$
is the number of state bits which is $n$.
This example shows that the lower bound of Theorem~\ref{PFrobezas} is strictly reachable.
\begin{figure}[htb]
	\centering
		\includegraphics[width=0.4\textwidth]{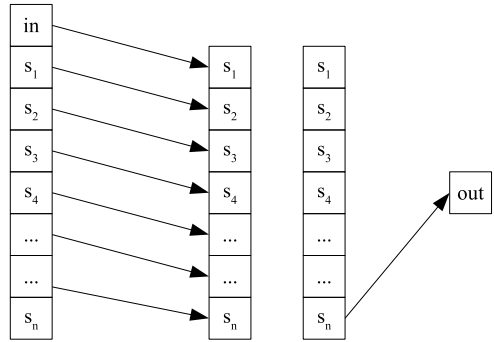}
	\caption{Representation $(F, G)$ of the DFA $A_n$}
	\label{ShiftN}
\end{figure}

Further we try to reach the upper bound. First we find a language
(based on the Shannon function) for which the BC-complexity is at least $s/\log^2(s)$,
afterwards by counting argument we show that BC-complexity for most
languages is close to $(k-1)s$. That matches the upper bound of Theorem~\ref{PFrobezas}
and can be thought of as the Shannon effect for BC-complexity.

Denote by $Sh_n$ the Shannon function on $n$ bits: lexicographically first Boolean function with $n$ input
bits and one output bit with maximal complexity of its minimal circuit.
Consider a language $L^{Sh}_n$ that consists of all words $x_1x_2\dots x_k$ in binary alphabet such that
$Sh_n(x_{k-n+1},x_{k-n+2},\dots,x_k)=1$. 
State complexity of this language is not larger than $2^n$: it is enough to remember the last $n$ input letters.
But its BC-complexity is at least $2^n/{n^2}$.

\begin{theorem}
\label{e2}
BC-complexity of $L^{Sh}_n$ is at least $2^n/n^2$.
\end{theorem}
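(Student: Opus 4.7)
The plan is to reduce the computation of $Sh_n$ itself to a circuit built from the components of any DFA representation recognizing $L^{Sh}_n$, and then invoke Theorem~\ref{ShTeorema} applied to the Shannon function. Fix any DFA $A$ recognizing $L^{Sh}_n$ together with any representation $(F,G)$ of $A$ using $b_Q$ state bits. On a length-$n$ input $x_1x_2\cdots x_n$, the automaton accepts iff $Sh_n(x_1,\ldots,x_n)=1$, so starting from the encoded start state $f_Q(q_0)=0^{b_Q}$, feeding $x_1,x_2,\ldots,x_n$ one at a time into $n$ sequentially composed copies of $F$, and finally applying $G$ to the resulting state, yields a Boolean circuit whose output is exactly $Sh_n(x_1,\ldots,x_n)$.

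The size of this composed circuit is at most $n\cdot C(F)+C(G)$; hard-wiring the first copy of $F$ on the constant state input $0^{b_Q}$ can only decrease the number of gates. Therefore
$$C(Sh_n)\leq n\cdot C(F)+C(G)\leq n\bigl(C(F)+C(G)\bigr).$$
By the definition of $Sh_n$ as the lexicographically first $n$-variable, $1$-output function of maximum circuit complexity, $C(Sh_n)$ is at least the complexity of almost all such functions, so the lower bound half of Theorem~\ref{ShTeorema} with $m=1$ gives $C(Sh_n)\gtrsim 2^n/n$.

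Combining the two inequalities yields $C(F)+C(G)\gtrsim 2^n/n^2$, whence $\cBC((F,G))\geq C(F)+C(G)+b_Q\geq 2^n/n^2$. Since $(F,G)$ and $A$ were arbitrary, taking the minimum over all representations of all DFAs recognizing $L^{Sh}_n$ gives $\cBC(L^{Sh}_n)\geq 2^n/n^2$, as required.

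The argument is essentially a clean simulation plus one appeal to the Shannon bound, so I do not expect serious obstacles. The only points requiring care are verifying that the composition really produces a circuit of size at most $n\cdot C(F)+C(G)$ (using that the start state is encoded as all zeros, which allows constant propagation rather than gate addition) and confirming that the lower bound in Theorem~\ref{ShTeorema}, stated for almost all functions, transfers to $Sh_n$ since $Sh_n$ is defined to achieve the maximum complexity.
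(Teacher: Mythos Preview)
Your proposal is correct and follows essentially the same approach as the paper: compose $n$ copies of the transition circuit $F$ in series, append the acceptance circuit $G$, observe that the resulting circuit computes $Sh_n$, and invoke the Shannon lower bound $C(Sh_n)\gtrsim 2^n/n$ to conclude $C(F)+C(G)\geq 2^n/n^2$. Your write-up is in fact slightly more careful than the paper's, explicitly noting that the start state is encoded as $0^{b_Q}$ (so no extra gates are needed to initialize) and that the asymptotic lower bound in Theorem~\ref{ShTeorema} applies to $Sh_n$ because $Sh_n$ is by definition of maximal complexity.
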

\begin{proof}
Let $(F, G)$ be a pair of Boolean circuits that represents some DFA $A_n$ recognizing $L^{Sh}_n$.
Assume $F$ has one input bit that represents input letter from the tape and $m=b_Q$ state bits. 
By concatenating $n$ circuits $F$ together with one circuit $G$ as in figure~\ref{ShFpieradijums}.
(state bit output of $j$-th circuit is passed as state bit input
of $j+1$-st) one can obtain a circuit whose size is not larger than $nC(F)+C(G)$ and which
computes Shannon function $Sh_n$ on its $n$ input bits.
From Theorem~\ref{ShTeorema} the complexity of this circuit is at least $2^n/n$.
From $nC(F)+C(G)>2^n/n$ we get that $C(F)+C(G)>2^n/n^2$.
\qed
\end{proof}

\begin{figure}[htb]
	\centering
		\includegraphics[width=0.7\textwidth]{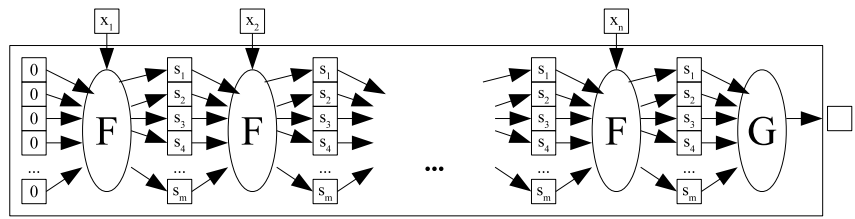}
	\caption{Circuit construction for the Shannon function $Sh_n$}
	\label{ShFpieradijums}
\end{figure}

%\subsection{Shannon effect for BC- and MEBC-complexity}

Theorem~\ref{e2} shows that for some language $L^{Sh}_n$ with $s$ states its BC-complexity is at least $s/(\log{s})^2$.
Next theorem is an extension of this result. With the use of nonconstructive methods (counting argument)
one can show that this value can be raised up to $(k-1)s$. But in the beginning we will need a formula to estimate
the number of automata with a given BC-complexity.

\begin{theorem}
\label{apaksRobMinim}
Fix $\Sigma$ and denote $\mathfrak{A}(c)$ to be the class of those minimal DFAs
whose BC-complexity is less than $c$. If $|\Sigma|=k\geq 2$ then for any $\varepsilon>0$ 
$$\lim_{s\to\infty}\frac{|\mathfrak{A}((1-\varepsilon)(k-1)s)|}{|\mathfrak{A}_s|}=0$$

If $|\Sigma|=1$ then for any $\varepsilon>0$
$$\lim_{s\to\infty}\frac{|\mathfrak{A}((1-\varepsilon)\frac{s}{\log{s}})|}{|\mathfrak{A}_s|}=0$$
\end{theorem}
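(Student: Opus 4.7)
My plan is a counting argument. By Theorem~\ref{autSkaits} we have $\log|\mathfrak{A}_s|\geq (k-1)s\log s+s$, so it suffices to upper bound $\log|\mathfrak{A}(c)|$ strictly below this for $c=(1-\varepsilon)(k-1)s$ (respectively $c=(1-\varepsilon)\frac{s}{\log s}$ when $k=1$). I obtain the bound by counting Boolean circuit representations rather than DFAs directly.

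Every $A\in\mathfrak{A}(c)$ admits a representation $(f_\Sigma,f_Q,F,G)$ with $C(F)+C(G)+b_Q<c$, and $A$ is determined (up to state relabeling) by the Boolean functions $F,G$ together with the lengths $b_Q$, $b_\Sigma$ and the injective encoding $f_\Sigma$. Pruning input wires that are not read by any gate or output of $F$ shows that, without loss of generality, $b_\Sigma\leq\max(\lceil\log k\rceil,\,2C(F))=O(c)$, because the total number of input slots in $F$ is only $2C(F)+b_Q$. For each fixed $(b_Q,b_\Sigma,C_F,C_G)$, Theorem~\ref{shemuSkaits} gives
\[
N(b_Q+b_\Sigma,b_Q,C_F)\cdot N(b_Q,1,C_G)\ \leq\ 9^{O(c)}\,(O(c))^{(C_F+b_Q)+(C_G+1)}\ \leq\ 9^{O(c)}\,(O(c))^{c+1}.
\]
There are $\mathrm{poly}(c)$ choices of $(b_Q,b_\Sigma,C_F,C_G)$ and at most $2^{kb_\Sigma}\leq 2^{O(c)}$ choices of $f_\Sigma$, so altogether
\[
\log|\mathfrak{A}(c)|\ \leq\ c\log c+O(c).
\]

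For $k\geq 2$, substituting $c=(1-\varepsilon)(k-1)s$ yields $\log|\mathfrak{A}(c)|\leq(1-\varepsilon)(k-1)s\log s+O(s)$, so
\[
\log|\mathfrak{A}_s|-\log|\mathfrak{A}(c)|\ \geq\ \varepsilon(k-1)s\log s+s-O(s)\ \to\ \infty,
\]
and the ratio tends to $0$. For $k=1$ with $c=(1-\varepsilon)s/\log s$, the same bound gives $\log|\mathfrak{A}(c)|\sim(1-\varepsilon)s$, which is compared with $\log|\mathfrak{A}_s|\geq s$ (Theorem~\ref{autSkaits} with $k=1$) to reach the same conclusion.

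The main obstacle is extracting the sharp leading constant in the circuit count. A crude use of Theorem~\ref{shemuSkaits} that replaces the exponent $C+m$ by $C+n$ in both factors yields a bound of order $c^{2c}$, which would only prove the Shannon effect for $\varepsilon>1/2$. The key observation is that the acceptance circuit $G$ has a single output, so it contributes only $C_G+1$ to the exponent; thus the two exponents sum to $C_F+C_G+b_Q+1\leq c$ instead of $2c$, which is exactly what is needed to match the $(k-1)s\log s$ lower bound on $\log|\mathfrak{A}_s|$. The secondary technical point is that $b_\Sigma$ does not appear in the BC-complexity and must be bounded independently, which is handled by removing unused input wires of~$F$.
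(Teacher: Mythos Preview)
Your argument is correct and follows the same counting strategy as the paper: bound $\log|\mathfrak{A}(c)|$ by $c\log c+O(c)$ via Theorem~\ref{shemuSkaits}, then compare with $\log|\mathfrak{A}_s|\geq (k-1)s\log s+s$ from Theorem~\ref{autSkaits}.

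There are two minor technical differences worth noting. First, the paper bounds $b_\Sigma$ by the constant $2^k$ rather than by $O(c)$: each input bit is a function $\Sigma\to\{0,1\}$, and there are only $2^{|\Sigma|}=2^k$ such functions, so duplicate input bits can be eliminated without enlarging the circuit. This makes the contribution of $b_\Sigma$ and of the choice of $f_\Sigma$ a constant depending only on $k$, so no separate accounting for $f_\Sigma$ is needed. Your wire-pruning bound $b_\Sigma=O(c)$ is coarser but still sufficient, since $b_\Sigma$ only enters the $9^{O(c)}$ factor and the base, not the exponent. Second, instead of counting $F$ and $G$ separately and summing exponents, the paper merges them into a single circuit $H$ with $b_Q+b_\Sigma$ inputs and $b_Q+1$ outputs and applies Theorem~\ref{shemuSkaits} once, obtaining directly the exponent $(c-b_Q)+(b_Q+1)=c+1$. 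Your observation that $G$ has a single output, so the two exponents sum to $C_F+C_G+b_Q+1\le c+1$, achieves exactly the same thing; the merged-circuit version is just a slightly more compact packaging of the same idea.
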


\begin{proof}
By Theorem~\ref{autSkaits} $\mathfrak{A}_s\geq 2^{s}s^{(k-1)s}$.
Denote $l=2^k$, it is clear that no more than $l$ input bits for data input will be used for the
representation for which BC-complexity is minimal. If more bits are used, then some of them will be equal as there
are only $2^k$ functions that maps $k$ inputs letters to $\{0, 1\} (bits)$.

\begin{figure}[htb]
	\centering
		\includegraphics[width=0.3\textwidth]{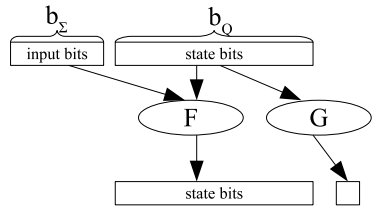}
	\caption{Merged acceptance and transition circuits}
	\label{FGH}
\end{figure}

Consider a representation $(F, G)$ of some encoding $E(A)$ of $A$.
% such that $C(F)+C(G)+b_\Sigma<(1-\varepsilon)(k-1)s$.
Merge these two circuits $F$ and $G$ and obtain one circuit $H$ with $b_Q+b_\Sigma$ inputs and $b_Q+1$ output bits,
the first $b_Q$ of which correspond to the output of the transition circuit $F$, but the last output bit corresponds to the output of the acceptance circuit $G$ (Figure~\ref{FGH}).
The complexity of this circuit $H$ is $C(F)+C(G)$,
for any two minimal automata these "merged" circuits will be different.

Now we want to estimate the number of representations with BC-complexity less that $c$.
Such representations have at least $\lceil\log{s}\rceil$ and no more than $c$ state bits.
The complexity of the "merged" circuit $H$ for a representation with $b_Q$ state bits cannot be more than $c-b_Q$,
the number of such circuits $H$ from theorem~\ref{shemuSkaits} is not larger than
$$N(b_Q+b_\Sigma, b_Q+1, c-b_Q)<N(b_Q+l, b_Q+1, c-b_Q)<9^{c+l}(c+l)^{c+1}.$$

Therefore the number of representations with complexity $c$ is not larger than
$$\sum_{b_Q=1}^c N(b_Q+l,b_Q+1, c-b_Q)<c9^{c+l}(c+l)^{c+1}<9^{c+l}(c+l)^{c+2}.$$

To prove the theorem we have to show that
$$\lim_{s\to\infty}\frac{9^{c+l}(c+l)^{c+2}}{2^ss^{(k-1)s}}=0$$
or what is equivalent to that
$$\lim_{s\to\infty}\log{\left (9^{c+l}(c+l)^{c+2}\right )}-\log{\left (2^ss^{(k-1)s}\right )}=-\infty$$
for the stated values of $c$.

For $k\geq 2$ if we substitute $c=(1-\varepsilon)(k-1)s$ then after simplification we obtain an
equation of the form 
$$\lim_{s\to\infty}-\varepsilon s\log{s}+O(s)=-\infty$$
which is true.
The same happens in the case $k=1$ if we substitute $c=(1-\varepsilon)s/(\log{s})$.
\qed
\end{proof}

We have shown in Theorem~\ref{PFrobezas} that BC-complexity for any regular language with state complexity $s$
and input alphabet of size $k\geq 2$ is not "much larger" than $(k-1)s$. Theorem~\ref{apaksRobMinim} states that for minimal
encodings recognition of almost all such languages would require circuits of size around $(k-1)s$. This can be thought of
as the "Shannon effect" for the BC-complexity of automata: for almost all automata its value is close
to the maximum.

%We do not know
%what is the situation with arbitrary encodings (BC-complexity). We do not think that for a "random" DFA arbitrary encoding
%can be substantially better than minimal, but results of the next section suggest, that in some special cases
%this can be true, although both of these remain open questions.

\section{Minimization of BC-complexity}
\label{6nod}

For the state complexity of DFA an efficient minimization algorithm~\cite{H71} is well known which, given a DFA, finds
the state complexity of it as well as the the minimal DFA itself.
This is in a big contrast with complexity measures of general programs (Turing machines) for which their complexity
(space or time) cannot be determined by any means in the general case.

It is easy to notice that finding the BC-complexity of a DFA is NP-hard.
\begin{theorem}
Finding the BC-complexity of a DFA given its arbitrary representation is NP-hard.
\end{theorem}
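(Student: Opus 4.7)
The plan is to reduce a known NP-hard problem to the decision version of BC-complexity, i.e.\ ``given a representation $(F,G)$ and an integer $k$, is $\cBC(A)\leq k$?''. A natural candidate is CIRCUIT-SAT: given a Boolean circuit $\phi$ with $n$ inputs, decide whether $\phi$ is satisfiable. The goal is to construct, in polynomial time in $|\phi|$, a DFA representation $(F_\phi, G_\phi)$ and a threshold $k_\phi$ such that the DFA $A_\phi$ represented by $(F_\phi, G_\phi)$ satisfies $\cBC(A_\phi)\leq k_\phi$ iff $\phi$ is unsatisfiable (or iff it is satisfiable, depending on which side of the gap is easier to arrange).

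First I would design $A_\phi$ so that $\phi$ is ``rigidly'' embedded in one of its circuits---for example, by making the acceptance circuit of the natural representation essentially compute $\phi$ on a fixed contiguous portion of the state register, or by letting one distinguished input letter trigger a transition whose Boolean behaviour coincides with $\phi$. The representation handed to the algorithm would contain $\phi$ verbatim as a subcircuit, so the instance stays polynomial in $|\phi|$. The threshold $k_\phi$ would be set just below the natural complexity of this representation, so that beating it forces the existence of a genuinely smaller equivalent circuit for the function induced by $\phi$---which is only possible when $\phi$ collapses to a trivial (e.g.\ constant) function.

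Second I would verify correctness in the two usual directions. The ``easy'' direction exhibits an explicit small representation whenever $\phi$ is a constant function, using that constants admit trivial circuits. The ``hard'' direction argues that whenever $\phi$ computes a non-trivial function, no encoding of the state space of $A_\phi$ can drop $\cBC$ below $k_\phi$. Here I would invoke a counting/information-theoretic lower bound in the spirit of Theorem~\ref{apaksRobMinim}, which bounds how much re-encoding the state space can help.

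The main obstacle is precisely the encoding freedom built into $\cBC$: since we minimise over all state encodings \emph{and} over all equivalent DFAs, a clever permutation of the state register could in principle replace the embedded copy of $\phi$ by something much cheaper, thereby washing out the reduction. Making the lower bound on $\cBC(A_\phi)$ robust against every state encoding is where the technical work will concentrate. Should CIRCUIT-SAT prove too slippery for this reason, a natural fallback is to reduce from a succinctly represented combinatorial problem---succinct reachability or a succinct bandwidth/permutation question---whose NP-hardness is more directly compatible with the freedom to relabel states and whose ``yes''/``no'' gap translates cleanly into a gap on $C(F)+C(G)+b_Q$.
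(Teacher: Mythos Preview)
Your plan contains a genuine gap, and it stems from where you place the threshold $k_\phi$. You propose to set $k_\phi$ ``just below the natural complexity of this representation'' and then argue that beating it ``forces the existence of a genuinely smaller equivalent circuit for the function induced by $\phi$---which is only possible when $\phi$ collapses to a trivial (e.g.\ constant) function.'' That last clause is false: a satisfiable circuit can compute an extremely simple function. Take $\phi = x_1 \vee (\psi \wedge \neg\psi)$ for any large $\psi$; the given circuit has size $|\psi|+O(1)$, it is satisfiable, yet the Boolean function it computes is just $x_1$, which admits a circuit of size $0$. So a clever re-encoding \emph{can} drop $\cBC(A_\phi)$ far below your $k_\phi$ even when $\phi$ is satisfiable, and the reduction breaks. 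You are in effect trying to reduce from circuit minimisation, not from CIRCUIT-SAT, and the counting bound of Theorem~\ref{apaksRobMinim} cannot rescue this: it is a non-constructive statement about \emph{most} DFAs and says nothing about the particular $A_\phi$ you built.

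The paper's argument avoids this entirely by putting the threshold at $0$ rather than near $|\phi|$. Over a unary alphabet, take $2^n$ states arranged in a single cycle and declare a state accepting iff the SAT instance evaluates to $1$ on the corresponding $n$-bit string. If the instance is unsatisfiable the language is empty, hence recognised by a one-state DFA with $b_Q=0$, so $\cBC=0$. If it is satisfiable the language is non-empty, so \emph{every} recognising DFA has at least two states, hence $b_Q\geq 1$ and $\cBC\geq 1$. The ``hard direction'' is then a one-line state-counting argument and needs no control over circuit size at all. (One should also rule out the tautology case so that the language is not all of $\Sigma^*$; this is easily arranged, e.g.\ by conjoining a fresh variable.) The lesson is that you do not need to pin down the circuit complexity of $\phi$---you only need the coarse dichotomy \emph{empty vs.\ non-empty language}, which translates directly into $b_Q=0$ vs.\ $b_Q\geq 1$.
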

\begin{proof}
We will reduce SAT problem to finding minimal BC-complexity of a DFA. Given a SAT problem instance that contains $n$ variables,
consider a DFA with $n$ state bits ($2^n$ states), that works in one letter alphabet, its state transition
function is a "circle", that goes through all the states, and accepting states are those, for which this SAT
instance gives positive output.

Now assume that this SAT instance is not satisfiable --- then this DFA never accepts and therefore its minimal DFA
has 1 state (0 state bits) and its BC-complexity is 0. If this SAT instance
is satisfiable, then any representation of it  will have some state bits and therefore its BC-complexity will be at least 1.
Therefore if one could efficiently find BC-complexity of a given DFA, he could also solve any SAT problem.
\qed
\end{proof}

Further we show one interesting property of BC-complexity --- that
for some DFAs  BC-complexity is significantly smaller than for
their equivalent minimal DFAs. 
The theorem is based on the conjecture that $PSPACE\not\subseteq P/Poly$.
The proof of this theorem for transducers can be found in~\cite{V11}, for DFAs it is almost the same and is omitted here.
Denote by $M(L)$ the minimal DFA recognizing language $L$.

\begin{theorem}
If there is a polynomial $p(x)$ such that
$\cBC(M(L))<p(\cBC(L))$
for all regular languages $L$ then $PSPACE\subseteq P/Poly$.
\label{superTeorem}
\end{theorem}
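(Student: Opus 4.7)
The plan is to use the hypothesis $\cBC(M(L)) < p(\cBC(L))$ directly to place an arbitrary PSPACE language into P/Poly. Fix any $K \in PSPACE$ decided by a Turing machine $T$ in space $s(n) = n^c$, and hence in time at most $T_0(n) \leq 2^{O(n^c)}$. The goal is to produce, for every $n$, a polynomial-size non-uniform Boolean circuit $C_n$ that decides $K \cap \{0,1\}^n$.

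The first step is to introduce the regular language
$$L_n = \{\, x \, \#^{T_0(n)} \, z \;:\; x \in \{0,1\}^n \cap K,\; z \in \{0,1,\#\}^* \,\}$$
over the alphabet $\{0,1,\#\}$ and to exhibit a DFA for it with small BC-complexity. The state register holds a phase indicator, a counter of $O(\log T_0(n))$ bits, an $n$-bit input buffer, and an $O(s(n))$-bit current configuration of $T$; during the first $n$ letters the DFA fills the buffer, during the next $T_0(n)$ letters $\#$ it advances $T$ by one step, and afterwards it enters an accept or reject sink according to whether $T$ accepted. Each of these updates is a shallow local Boolean operation on the register, so both the transition and the acceptance circuits have size polynomial in $n$; this gives $\cBC(L_n) \leq q(n)$ for a fixed polynomial $q$. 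The hypothesis then yields $\cBC(M(L_n)) \leq p(q(n))$, which is polynomial in $n$; let $(F_n, G_n)$ be a representation of $M(L_n)$ achieving this bound.

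The main step is to extract $C_n$ from $(F_n, G_n)$. A direct Myhill--Nerode analysis shows that a length-$n$ string $x$ has exactly one of two possible residuals in $L_n$: the set $\#^{T_0(n)}\{0,1,\#\}^*$ if $x \in K$, and $\emptyset$ otherwise. Consequently $M(L_n)$ contains at most two distinguished states, $q_{\mathrm{yes}}$ and $q_{\mathrm{no}}$, reached from the initial state after reading any length-$n$ prefix from $K$ or from $\{0,1\}^n \setminus K$, respectively. I would build $C_n$ by iterating $F_n$ exactly $n$ times on the input $x$ and then comparing the resulting state encoding against a hard-wired copy of $q_{\mathrm{yes}}$: the encoded state has polynomial length, $F_n$ has polynomial size, and the hard-wired $q_{\mathrm{yes}}$ is the non-uniform advice allowed by P/Poly. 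Correctness follows from the equivalence that the state reached after reading $x$ equals $q_{\mathrm{yes}}$ iff $x \in K$.

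The main obstacle I foresee is the Myhill--Nerode bookkeeping needed to pin down exactly those two equivalence classes at level $n$ of $M(L_n)$ and to handle the degenerate cases $K \cap \{0,1\}^n = \emptyset$ or $= \{0,1\}^n$, in which one of $q_{\mathrm{yes}}, q_{\mathrm{no}}$ is absent and $C_n$ can be taken to be the corresponding constant function. The remaining ingredients are routine: implementing one step of $T$ by a standard local circuit on the configuration register, bounding $s(n)$ and $T_0(n)$ by a polynomial and by $2^{\mathrm{poly}(n)}$, respectively, and observing that non-uniformity is harmless because each $C_n$ needs only its own advice string $q_{\mathrm{yes}}$.
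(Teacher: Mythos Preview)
The paper does not actually give a proof of this theorem: it states that the argument for transducers appears in~\cite{V11} and that the DFA case ``is almost the same and is omitted here''. So there is no in-paper proof to compare against line by line.

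That said, your proposal is a correct and standard instantiation of the intended argument. The padding language $L_n$ is engineered so that (i) a non-minimal DFA can simulate the PSPACE machine one step per $\#$ with a polynomial-size transition circuit, giving $\cBC(L_n)=\mathrm{poly}(n)$, and (ii) all length-$n$ binary prefixes collapse to just two Myhill--Nerode classes in $M(L_n)$, so that a polynomial-size representation of $M(L_n)$, iterated $n$ times, decides $K\cap\{0,1\}^n$. Two small remarks: first, the non-uniform advice for $C_n$ is not only the encoding of $q_{\mathrm{yes}}$ but the whole representation $(F_n,G_n)$ together with that encoding --- this is still polynomial, so nothing changes; second, the ``input buffer'' is not really needed as a separate register, since you can write each incoming bit directly onto the simulated tape and keep a single position counter. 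Neither point affects correctness.
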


It means that in some cases by minimizing the number of states
(minimizing state complexity) BC-complexity
of the transition function can increase superpolynomially. And on the other hand, sometimes allowing equivalent states
in the automaton helps to keep BC-complexity small.

\section{BC-complexity applications}
\subsection{Nondeterministic automata}
Theorems~\ref{apaksRobMinim} and \ref{PFrobezas} suggest that for most DFAs in $k$-letter alphabet with $s$ states BC-complexity is around $(k-1)s$. But in many cases when DFAs with a large state space
are constructed by some standard method, it turns out that their BC-complexity is exponentially smaller
than this maximal expected value --- it is of order $Polylog(s)$. Further we look at some of these
standard constructions starting with the determinization of an NFA.

\begin{theorem}
\label{nedetTrans}
If a language $R$ over alphabet $\Sigma$, $|\Sigma|=k$ can be recognized by an NFA $N$ with $n$ states
and $t$ transitions,
then it can also be recognized
by a DFA $A$ for which
$C_{BC}(A)\leq t+(k+1)n+k\log{k}$.
\end{theorem}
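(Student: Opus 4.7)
The plan is the subset construction with careful gate bookkeeping. Encode DFA states as subsets of the NFA state set $\{q_1,\ldots,q_n\}$ using $n$ state bits $s_1,\ldots,s_n$, with $s_i=1$ meaning $q_i$ is in the current subset; this gives $b_Q=n$, matching the $n$ summand from the definition of $\cBC$. Encode the input letters in $\lceil\log k\rceil$ bits and prepend a one-hot decoder producing signals $x_1,\ldots,x_k$; the naive minterm-by-minterm construction costs at most $k\log k$ gates and yields the last summand of the bound.

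For the transition circuit, group the transitions by their target and input letter:
\[
s'_l \;=\; \bigvee_{j=1}^{k}\bigl(x_j \wedge \tau^{(j)}_l\bigr), \qquad \tau^{(j)}_l \;=\; \bigvee_{(q_i,a_j,q_l)\ \text{transition of}\ N} s_i.
\]
Let $T'$ be the number of pairs $(j,l)$ for which $\tau^{(j)}_l$ is nontrivial and $n'\leq n$ the number of NFA states reached by at least one transition. Since every one of the $t$ NFA transitions is a fanin to exactly one $\tau^{(j)}_l$, computing all the $\tau^{(j)}_l$'s uses $t-T'$ OR gates; the $T'$ products $x_j\wedge\tau^{(j)}_l$ and the per-output ORs together contribute another $T'+(T'-n')$ gates, for a transition-circuit total of $t+T'-n'$ gates.

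The only nontrivial step is the bound $T'-n'\leq(k-1)n$. It holds because for each of the $n'$ states $q_l$ receiving any transition, at most $k$ letters can contribute a nontrivial $\tau^{(j)}_l$; writing $N_l$ for that number, $T'-n' = \sum_{l:N_l\geq 1}(N_l-1)\leq (k-1)n$. This is exactly what forces the coefficient of $t$ in the final bound to be $1$ rather than $2$. The acceptance circuit is simply the OR of $s_i$ over the NFA's accepting states, costing at most $n$ gates. Summing the state-bit count $n$, the transition-circuit bound $t+(k-1)n$, the acceptance-circuit bound $n$, and the decoder cost $k\log k$ yields $t+(k+1)n+k\log k$, as required.

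A minor loose end is that the subset-construction start state is $\{q_{\text{init}}\}$ rather than $0^n$. As in the proof of Theorem~\ref{PFrobezas}, this is fixed by negating the bit corresponding to $q_{\text{init}}$ on the appropriate wires, adding only a constant number of gates that the stated bound absorbs.
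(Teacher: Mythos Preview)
Your proof is correct and follows essentially the same route as the paper: subset construction with $n$ state bits, a $k\log k$ one-hot decoder for the input, the transition formula $s'_l=\bigvee_j\bigl(x_j\wedge\bigvee_{(q_i,a_j,q_l)}s_i\bigr)$, and an OR over accepting states. Your bookkeeping with $T'$ and $n'$ is a slightly more careful version of the paper's count (the paper simply charges $t$ gates to the blocks $x_j\wedge\bigvee s_i$ and a flat $(k-1)n$ to the outer disjunctions), and you explicitly handle the start-state relabelling that the paper silently omits here; otherwise the arguments are the same.
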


\begin{proof}
Consider a DFA $A$ that is obtained by a standard construction from NFA $N$. Its set of states is the
powerset of the set of states of $N$. The state space of $A$ will consist of $2^n$ states
(may be some of them will not be reachable), which can be encoded in $n$ state bits. Each state
bit of an encoding of $A$ will correspond to one state of $N$. For input letters we choose arbitrary
minimal input encoding into $\log{k}$ bits.

The transition circuit of $A$ can be obtained from the transition function of $N$.
NFA $N$ after reading input letter $x\in\Sigma$ will be in state $q_i$,
if there is a state $q_j$, in which it was before (NFA can be in many states simultaneously)
and from which reading input letter $x$ leads to state $q_i$. Denote by $Q_{a}^i$ subset of states of $N$ from which
reading letter $a$ leads to state $q_i$. Denote by $Q_t$ a subset of states in which $N$ is after reading $t$ letters.
If $N$ reads input letter $a$ in step $t$ then:
$$q_i \in Q_{t+1} \leftrightarrow (Q_t\cap Q_{a}^i)\neq\emptyset.$$
In the circuit it means that if $x$ denotes the encoded input letter then
$$q'_i = \bigvee_{a\in\Sigma}( (x=a) \& \bigvee_{q\in Q_{a}^i}q).$$

%Part of the circuit that corresponds to state $q'_1$ in case of $\Sigma=\{0, 1\}$ is shown in figure~\ref{nedetAut}.

To construct all $k$ subcircuits $x=a$ we need $\log{k}$ negations and $k(\log{k}-1)$ conjunctions.

The size of the block $\&\bigvee_{q\in Q_{a}^i}q$ is the number of transitions entering state $q$ on input $a$ therefore the total number of these inner disjunctions and conjunctions for all output bits $q'_i$ is $t$. There are also $(k-1)n$ outer disjunctions
$\bigvee_{a\in \Sigma}$.
In total the complexity of the transition circuit is not larger than $k(\log k-1)+\log{k}+(k-1)n + t \leq t + (k-1)n + k\log{k}$.

Acceptance circuit $G$ is a disjunction of all the final states
of $N$, the complexity of this it is not larger than $n-1$. Also $b_Q=n$ have to be added to the BC-complexity.
Therefore the total BC-complexity of $A$ is not larger than $t+(k+1)n+k\log{k}$.
\qed
\end{proof}

As the number of transitions is not larger than $kn^2$ then
\begin{corollary}
\label{nedet}
If a language $R$ in alphabet $\Sigma$, $|\Sigma|=k$ can be recognized with an NFA $N$ with $n$ states,
then it can also be recognized
with a DFA $A$ for which
$C_{BC}(A)\leq kn^2+(k+1)n+k\log{k}$.
\end{corollary}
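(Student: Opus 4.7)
The plan is to deduce this corollary directly from Theorem~\ref{nedetTrans} by bounding the number of transitions $t$ of the NFA $N$ purely in terms of $n$ and $k$. Since an NFA's transition relation is a subset of $Q \times \Sigma \times Q$, the total number of transitions cannot exceed $|Q|\cdot|\Sigma|\cdot|Q| = k n^2$.

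Concretely, first I would invoke Theorem~\ref{nedetTrans} to obtain a DFA $A$ equivalent to $N$ such that
$$C_{BC}(A) \leq t + (k+1)n + k\log k.$$
Then I would substitute the trivial upper bound $t \leq k n^2$ into this inequality, yielding
$$C_{BC}(A) \leq k n^2 + (k+1)n + k\log k,$$
which is exactly the claimed bound.

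There is essentially no obstacle here: the bound $t \leq kn^2$ is an immediate counting fact about NFAs with $n$ states over a $k$-letter alphabet, and Theorem~\ref{nedetTrans} does all the real work of constructing the representation of $A$. The corollary is useful mainly because it restates the bound of the theorem in a form that depends only on the state count $n$, at the cost of replacing the (often much smaller) quantity $t$ by its worst-case value $kn^2$.
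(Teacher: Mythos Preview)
Your proposal is correct and matches the paper's own argument exactly: the paper simply remarks that the number of transitions is at most $kn^2$ and applies Theorem~\ref{nedetTrans}.
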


%\begin{figure}[htb]
%	\centering
%		\includegraphics[width=0.6\textwidth]{bildes/nedet.eps}
%	\caption{Part of the circuit for DFA $A$ that corresponds to state $q'_1$}
%	\label{nedetAut}
%\end{figure}

%\begin{corollary}
%If $\Sigma=\{0, 1\}$ and language $R$ has an NFA $N$ with $n$ states that recognizes it, then it can be recognized
%also with a DFA $A$, such that $C_{BC}(A)\leq 2n(n+1)$.
%\end{corollary}
%
%This can be viewed other way around as well - if for a language $R$ in alphabet $\Sigma=\{0, 1\}$ $C_{BC}(R)>2n(n+1)$,
%then it cannot be recognized with an NFA with $n$ or less states.

\subsection{Language operations}

State complexity of language operations has been studied long ago, e.g. in~\cite{SY00}. The result of some of the operations
(e.g. reversing) can lead to exponentially larger automata than the original one. Here we analyze how BC-complexity
changes with languages operations and observe that in those cases when the state complexity increases exponentially
it leads to automata whose state transition function is very structured therefore its BC-complexity is exponentially smaller
than state complexity.

For all operations we assume that we are given two languages $L_1$ and $L_2$ and $m=C_s(L_1)$,
$n=C_s(L_2)$, $a=\cBC(L_1)$, $b=\cBC(L_2)$, $k = |\Sigma|$. 
We start with the union and intersection.
\begin{theorem}
If $L_3=L_1\cup L_2$ or $L_3=L_1\cap L_2$ then $\cBC(L_3)\leq a+b+1$.
\end{theorem}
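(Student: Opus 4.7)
The plan is the standard product automaton construction, lifted to the level of circuits. Fix optimal representations $(F_1, G_1)$ and $(F_2, G_2)$ for DFAs $A_1, A_2$ recognizing $L_1, L_2$, with $b_{Q_1}$ and $b_{Q_2}$ state bits respectively, so that $\cBC((F_1,G_1)) = a$ and $\cBC((F_2,G_2)) = b$. I would also arrange that both representations use the same input encoding $f_\Sigma$ (the minimal one), which costs nothing asymptotically and is in any case allowed in the definition of BC-complexity of a language.

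Next I would build a representation $(F_3, G_3)$ for the product automaton on state space $Q_1 \times Q_2$. The state register of length $b_{Q_1} + b_{Q_2}$ is just the concatenation of the two state registers; since the start states of $A_1$ and $A_2$ are each encoded as $0^{b_{Q_i}}$, the start state of the product is encoded as $0^{b_{Q_1}+b_{Q_2}}$, as required. The transition circuit $F_3$ is constructed by placing $F_1$ and $F_2$ side by side, feeding the $b_\Sigma$ input-letter wires into both, feeding the first $b_{Q_1}$ state wires into $F_1$ and the remaining $b_{Q_2}$ into $F_2$, and collecting their outputs as the new state register. Wires cost nothing, so $C(F_3) = C(F_1) + C(F_2)$.

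For the acceptance circuit, I would apply $G_1$ to the first $b_{Q_1}$ state bits, $G_2$ to the remaining $b_{Q_2}$, and combine the two one-bit outputs with a single $\vee$ gate (for $L_1 \cup L_2$) or a single $\&$ gate (for $L_1 \cap L_2$). Hence $C(G_3) = C(G_1) + C(G_2) + 1$. Adding up,
\[
\cBC((F_3, G_3)) = (b_{Q_1} + b_{Q_2}) + (C(F_1) + C(F_2)) + (C(G_1) + C(G_2) + 1) = a + b + 1,
\]
which bounds $\cBC(L_3)$ from above.

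There is no real obstacle here; the only point to verify is that the start state condition is preserved under concatenation of encodings and that one may indeed assume a common input encoding for the two minimizing representations (if not, one simply reroutes the input bits, which adds at most $O(\log k)$ gates and would only affect a lower-order term, but with the shared minimal input encoding assumption the bound $a + b + 1$ is exact).
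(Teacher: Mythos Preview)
Your proposal is correct and follows exactly the same product-automaton construction as the paper: run $F_1$ and $F_2$ in parallel on a concatenated state register, then combine $G_1$ and $G_2$ with a single $\vee$ or $\&$ gate. If anything you are more careful than the paper, which neither checks the start-state condition nor mentions the shared input-encoding issue you flag.
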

\begin{proof}
Assume circuits $(F_1, G_1)$ represent a DFA recognizing $L_1$ and $(F_2, G_2)$ represent a DFA recognizing $L_2$.
The transition function for a DFA recognizing $L_3$
would consist of circuits $F_1$ and $F_2$ working in parallel. The acceptance circuit consists of circuits $G_1$ and $G_2$
working on corresponding parts of bit vector followed by a disjunction (for union) or conjunction (for intersection) gate.
The number of state bits is the sum of state bits for representations $(F_1, G_1)$ and $(F_2, G_2)$.
The complexity of such a representation is $C(F_1)+C(F_2)+C(G_1)+C(G_2)+1+b_Q=a+b+1$.
\qed
\end{proof}

The complement of the language can be computed by the same pair of circuits as the language itself with
negation added at the end
of the acceptance circuit.
\begin{theorem}
If $L_3=\Sigma^*\setminus L_1$ then $\cBC(L_3)\leq a+1$.
\end{theorem}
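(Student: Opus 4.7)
The plan is to take the optimal representation of $L_1$ and modify only the acceptance circuit by a single negation, leaving the transition circuit and the state encoding untouched.

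More concretely, let $A_1 = (Q, \Sigma, \delta, q_0, \tilde{Q})$ be a DFA recognizing $L_1$ and let $(F_1, G_1)$ be a representation of $A_1$ under some encoding $(f_\Sigma, f_Q)$ achieving $\cBC(A_1) = a$. I would first observe that the DFA $A_3 = (Q, \Sigma, \delta, q_0, Q \setminus \tilde{Q})$, obtained from $A_1$ by swapping accepting and rejecting states, recognizes $L_3 = \Sigma^* \setminus L_1$. I would then build the representation $(F_3, G_3) := (F_1, \neg G_1)$ under the \emph{same} encoding $(f_\Sigma, f_Q)$, where $\neg G_1$ is the circuit obtained from $G_1$ by feeding its single output into one new NOT gate.

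Next I would check that $(F_3, G_3)$ is indeed a representation of $A_3$: the transition clause is satisfied because $F_3 = F_1$ and the transition function of $A_3$ is the same as that of $A_1$; the acceptance clause is satisfied because $G_3(f_Q(q)) = \neg G_1(f_Q(q)) = 1$ iff $q \notin \tilde{Q}$ iff $q$ is accepting in $A_3$. The start state still encodes to $0^{b_Q}$ as required, since the encoding is unchanged.

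Finally I would count: $C(F_3) + C(G_3) + b_Q = C(F_1) + (C(G_1) + 1) + b_Q = a + 1$. Taking the minimum over all representations of $A_1$ and over all DFAs recognizing $L_1$ gives $\cBC(L_3) \leq \cBC(L_1) + 1 = a + 1$. There is no real obstacle here; the only thing to be mildly careful about is that the single extra NOT gate contributes exactly $1$ to the gate count and that no new state bits are introduced.
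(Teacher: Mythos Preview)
Your proposal is correct and follows exactly the approach sketched in the paper: keep the transition circuit and encoding, append a single NOT gate to the acceptance circuit, and count. The paper's own argument is only a one-line remark to this effect, so your write-up is simply a more careful unpacking of the same idea.
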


A word $x_1x_2\dots x_n$ belongs to the reverse language $L_1^R$ iff $x_n\dots x_2x_1$ belongs to $L_1$.
NFA $N$ recognizing $L_1^R$ can be obtained from the DFA $A$ recognizing $L_1$ by setting the
start state of $N$ to be any accepting state of $A$, setting $q_0$ of $A$ to be the only accepting state of $N$
and reversing all the arrows.
DFA recognizing $L_1^R$ can be obtained from $N$ by running the standard process of determinization. 

%During this process the number of states
%can increase exponentially which happens in some cases~\cite{}.
%In this case also the the BC-complexity can increase exponentially.

\begin{theorem}
$C_{BC}(L_1^R)\leq (2k+1)m+k\log{k}$
\end{theorem}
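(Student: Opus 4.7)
The plan is to obtain $L_1^R$ by the classical reverse-then-determinize construction and invoke Theorem~\ref{nedetTrans} on the resulting reverse NFA. Concretely, I would let $A = (Q, \Sigma, \delta, q_0, \tilde{Q})$ be any DFA recognizing $L_1$ with $|Q| = m$ (e.g., the minimal DFA) and construct the NFA $N$ on the same state set: take $\tilde{Q}$ as the set of initial states of $N$, take $\{q_0\}$ as its single accepting state, and include a transition $(p, x, q)$ in $N$ whenever $\delta(x, q) = p$. A routine argument shows $L(N) = L_1^R$.

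The key counting step is to observe that, since $A$ is deterministic, each pair $(q, x) \in Q \times \Sigma$ contributes exactly one reversed arrow to $N$. Hence $N$ has $n = m$ states and exactly $t = km$ transitions. Plugging these two numbers into Theorem~\ref{nedetTrans} then gives
\[
\cBC(L_1^R) \;\leq\; t + (k+1)n + k\log k \;=\; km + (k+1)m + k\log k \;=\; (2k+1)m + k\log k,
\]
which is precisely the claimed bound.

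The one technicality I would need to address is that Theorem~\ref{nedetTrans} is phrased for an NFA with a single initial state, while our $N$ has the $|\tilde{Q}|$ initial states $\tilde{Q}$, and moreover the convention $f_Q(q_0) = 0^{b_Q}$ requires the initial subset of the determinized DFA to be encoded as the all-zero vector. Both of these are handled simultaneously by relabeling the subset encoding via XOR with the characteristic vector of $\tilde{Q}$: under this relabeling the initial subset $\tilde{Q}$ maps to $0^m$, and the transition circuit inherited from the proof of Theorem~\ref{nedetTrans} is corrected by at most $O(m)$ extra \textsc{not} gates placed on the relevant state input and output wires. I expect this bookkeeping about the initial-state convention — not any combinatorial obstacle — to be the main thing one must be careful about; no construction beyond reversal plus the subset construction already analyzed in Theorem~\ref{nedetTrans} is needed, and the acceptance circuit is trivial here since $N$ has only the one accepting state $q_0$.
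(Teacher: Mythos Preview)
Your proposal is correct and follows essentially the same route as the paper: reverse the DFA to obtain an NFA with $m$ states and exactly $km$ transitions, then plug $n=m$ and $t=km$ into Theorem~\ref{nedetTrans}. You are in fact more careful than the paper about the multiple-initial-state issue and the $0^{b_Q}$ start-state convention; the paper's proof is a one-line invocation of Theorem~\ref{nedetTrans} that glosses over both points.
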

\begin{proof}
This follows directly from Theorem~\ref{nedetTrans} and the fact, that NFA obtained by reversing all the
transitions has exactly $km$ transitions.
\qed
\end{proof}

Language $L_1L_2$ which is the concatenation of languages $L_1$ and $L_2$ consists of all words $uw$ such that
$u\in L_1$ and $w\in L_2$.
\begin{theorem}
$\cBC(L_1L_2)\leq a+ (2k+1)n+k\log{k}$
\end{theorem}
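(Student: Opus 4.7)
The plan is to run a DFA $A_1$ for $L_1$ deterministically while keeping a subset $S\subseteq Q_2$ of ``alive'' continuations in a DFA $A_2$ for $L_2$: namely, the states $A_2$ could be in had we nondeterministically chosen some prefix of the input to lie in $L_1$ and started $A_2$ at that moment. Then $w\in L_1L_2$ iff $S(w)\cap\tilde{Q}_2\neq\emptyset$, so the standard nondeterministic construction for concatenation is emulated by a DFA whose state register is the concatenation of $A_1$'s state register with $n$ indicator bits for~$S$.

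Fix an optimal representation $(F_1,G_1)$ of $A_1$ of BC-complexity~$a$, and view $A_2$ as an NFA over $\Sigma$ with $n$ states and $t\le kn$ transitions. I would build the transition circuit of the combined DFA from three ingredients: (i)~$F_1$ applied to the $A_1$-part of the register, yielding the next state $q_1'$; (ii)~the NFA-style subset update $S\mapsto\delta_2(x,S)$, implemented exactly as in the proof of Theorem~\ref{nedetTrans}; and (iii)~one OR gate that injects $G_1(q_1')$ into the bit of the new $S$ indexed by $q_0^{(2)}$, spawning a fresh $A_2$-trace each time $A_1$ enters an accepting state. The acceptance circuit of the combined DFA is the disjunction of the $S$-bits indexed by $\tilde{Q}_2$.

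Summing the contributions: ingredient (i), the re-used $G_1$, and the $b_Q^{(1)}$ state bits of the $A_1$-part contribute exactly~$a$; ingredient (ii), the $(n-1)$-gate acceptance disjunction, and the $n$ indicator bits of $S$ contribute at most $(2k+1)n+k\log k$ by Theorem~\ref{nedetTrans} applied to $A_2$ with $t=kn$; the extra OR gate from (iii) is absorbed by the slack in that theorem's estimate (the proof actually gives a transition-circuit bound of $(2k-1)n+k\log k+\log k-k$, leaving room of order $k-\log k$).

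The main obstacle is the normalization convention $f_Q(q_0)=0^{b_Q}$: if $q_0^{(1)}\in\tilde{Q}_1$, the semantically correct initial value of $S$ is $\{q_0^{(2)}\}$ rather than the zero vector. I would handle this by complementing the $q_0^{(2)}$-bit of the $S$-encoding precisely in that case, which adds at most three NOT gates at the relevant interfaces of the transition and acceptance circuits and is again absorbed by the slack in the estimate, giving the stated bound $a+(2k+1)n+k\log k$.
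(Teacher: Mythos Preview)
Your approach is essentially the same as the paper's: run $A_1$ via $(F_1,G_1)$, treat $A_2$ as an NFA and invoke Theorem~\ref{nedetTrans} with $t=kn$, and use $G_1$ to set the $q_0^{(2)}$-bit of the subset register. The paper's own proof is terser and simply omits the extra OR gate and the start-state normalization issue you address; your accounting of the slack is more careful than the original.
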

\begin{proof}
Assume DFA $A_1$ recognizes $L_1$, DFA $A_2$ recognizes $L_2$.
NFA that recognizes $L_1L_2$ can be obtained from $A_1$ and $A_2$  by adding $\varepsilon$-transitions
from all the accepting states of $A_1$ to the start state of $A_2$.
The standard construction of DFA from this NFA can be optimized --- it will consist of circuits $F_1$ and $G_1$ representing
$A_1$ together with a circuit $N(A_2)$ constructed from $A_2$ as from NFA as in Theorem~\ref{nedetTrans}.
Circuit $G_1$ sets state bit corresponding to state $q_0$ of $A_2$ to "1" iff $A_1$ is in accepting state.

By Theorem~\ref{nedetTrans} $C(N(A_2))\leq t + (k+1)n+k\log{k}$ and, since $A_2$ is a deterministic automaton, $t=kn$. Together
it gives that $\cBC(L_1L_2)\leq C(F_1)+C(G_1)+C(N(A_2))\leq a+kn+(k+1)n + k\log{k}=a+(2k+1)n+k\log{k}$.
\qed
\end{proof}

%\begin{figure}[htb]
%	\centering
%		\includegraphics[width=0.6\textwidth]{bildes/LangOpL1L2.eps}
%	\caption{Circuit $F$ representing transition function for a DFA that recognizes $L_1L_2$}
%	\label{L1L2}
%\end{figure}

\begin{theorem}
$\cBC(L_1^*)\leq km^2+(k+1)m+k\log k$.
\end{theorem}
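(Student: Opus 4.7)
The plan is to build an NFA for $L_1^*$ with $m$ states from $A_1$ and then invoke Corollary~\ref{nedet} directly; the stated bound $km^2+(k+1)m+k\log k$ is exactly what that corollary yields for an $m$-state NFA over a $k$-letter alphabet, so nothing further is needed once such an NFA is produced.

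Concretely, starting from the DFA $A_1=(Q,\Sigma,\delta,q_0,F)$ with $|Q|=m$, I would form the NFA $N$ on the same state set $Q$, with start state $q_0$, accepting set $F\cup\{q_0\}$ (so that $\varepsilon\in L_1^*$ is accepted), and nondeterministic transition relation
$$\delta'(a,q)=\{\delta(a,q)\}\cup\{\delta(a,q_0):q\in F\}.$$
The additional nondeterministic branch at each accepting state $q\in F$ lets the computation begin a fresh factor of $L_1$ by consuming the next letter $a$ from $q_0$, while the original deterministic branch $\delta(a,q)$ continues the current factor. An induction on word length then confirms $L(N)=L_1^*$: an accepting run of $N$ on $w$ partitions $w$ into substrings, each driving $A_1$ from $q_0$ into $F$; conversely any factorization $w=w_1\cdots w_j$ with $w_i\in L_1$ is realized by choosing the restart branch at each factor boundary.

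Since $N$ has $m$ states over a $k$-letter alphabet, Corollary~\ref{nedet} applied with $n=m$ immediately gives
$$\cBC(L_1^*)\leq km^2+(k+1)m+k\log k,$$
which is the claim. The subtle point that will take most of the work is verifying that adding $q_0$ to the accepting set of $N$ does not create spurious accepting runs on nonempty inputs that revisit $q_0$; in the degenerate case where this cannot be avoided, one falls back on the textbook construction with a fresh start state and appeals to Theorem~\ref{nedetTrans} with the explicit transition count $t\leq k(2m+1)$, which already beats the quadratic bound. Either route lands within the claimed estimate, and this is the only real step of the proof that goes beyond citing the already-established NFA determinization bound.
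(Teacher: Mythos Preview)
Your approach is essentially the paper's: construct an $m$-state NFA for $L_1^*$ from the DFA $A_1$ and apply Corollary~\ref{nedet} with $n=m$. The paper phrases the NFA construction via $\varepsilon$-transitions from accepting states back to $q_0$, whereas you write out the $\varepsilon$-eliminated transition relation directly and are more explicit about the degenerate case; these are cosmetic differences, and the quantitative step is identical.
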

\begin{proof}
NFA recognizing $L_1^*$ can be obtained from DFA recognizing $L_1$ by adding $\varepsilon$-transitions from
all the accepting states to the start state. The resulting NFA therefore also has $m$ states and the result
follows from Corollary~\ref{nedet}.
\qed
\end{proof}
%Table~\ref{valoduOperacijas} compares the state complexity and BC-complexity of language operations.

\begin{table}[htb]
\centering
\begin{tabular}{|c|c|c|}\hline
Operation & State complexity & BC-complexity\\\hline
$L_1 \cup L_2$ & $mn$ & $a+b+1$\\\hline
$L_1 \cap L_2$ & $mn$ & $a+b+1$\\\hline
$\Sigma^* - L_1$ & $m$ & $a+1$\\\hline
$L^R$ & $2^m$ & $ (2k+1)m+k\log{k}$\\\hline
$L_1L_2$ & $(2m-1)2^{n-1}$ & $a+(2k+1)n+k\log{k}$\\\hline
$L_1^*$ & $2^{m-1}+2^{m-2}$ & $ km^2+(k+1)m+k\log{k}$\\\hline
\end{tabular}
\caption{State complexity and BC-complexity of language operations}
\label{valoduOperacijas}
\end{table}

%\section{Connection with Kolmogorov complexity}
%In this section we look at Kolmogorov complexity as an alternative to BC-complexity.
%They aare somewhat similar, both of them tries to describe an automaton in
%a compact way. The main difference is that Kolmogorov complexity shows how easy is to describe
%the automaton, but BC-complexity shows how easy is to execute it on some input.
%
%\begin{definition}
%Kolmogorov complexity of a DFA $A$ is the minimal number $K(A)$ such that there is a Turing
%machine of size $K(A)$ that receives empty tape as input and produces the description (state
%transition table with accepting states marked in it) of $A$.
%\end{definition}
%
%It easy to see that Kolmogorov complexity of a DFA with $n$ states can be arbitrarily small.
%On the other hand BC-complexity is larger than $\log n +1$.
%Another difference is in minimization process. In contrast with BC-complexity which
%can increase superpolynomially when DFA is minimized, Kolmogorov complexity can increase
%only by a fixed constant.
%
%\begin{theorem}
%For all DFA $A$
%$$K(M(A))\leq K(A)+c$$
%for some constant $c$.
%\end{theorem}
%\begin{proof}
%By definition there is a turing machine $N$ of size $K(A)$ such that it produces state transition
%table of automaton $A$. There is a Turing Machine $M$ of size $c$ that given a description
%of a DFA $A$ produces a description of the equivalent minimal automaton $M(A)$. Then
%the turing machine that receives an empty string and first executes $N$ on it
%and then $M$ on the output of $N$ produces the description of $M(A)$ and has
%the size $K(A)+c$.
%\qed
%\end{proof}

\section{Conclusions and open problems}
\label{7nod}
In this paper a new measure of complexity, BC-complexity of DFAs and regular languages, was considered.
Transition function of a DFA as well as the characteristic function of the set of accepting states are expressed
as Boolean circuits and their circuit complexity is taken as a complexity measure (BC-complexity) of this DFA.
It turns out that BC-complexity can vary exponentially for DFA with the same number of states (Theorem~\ref{PFrobezas}).
Theorem~\ref{apaksRobMinim} states that almost all DFAs
BC-complexity is close to maximum ("Shannon effect"). 

In all asymptotic  constructions minimal encodings for state and input alphabet where used,
but it is not known if minimal encodings are always optimal. We think that sometimes they are not, but showing
an example where other encoding
than minimal would be more
efficient (in the sense of minimizing BC-complexity) is an interesting open question.

In section~\ref{6nod} it was shown that BC-complexity of a regular language can be much smaller
than the BC-complexity of the minimal DFA that recognizes it.
%Theorem~\ref{superTeorem} showed a sequence of languages for which BC-complexity is
%more than polynomially smaller than BC-complexity of corresponding minimal DFAs. It means that minimization
%of some DFAs can lead to a significant increase of their BC-complexity.
%Also the sequence of languages in Theorem~\ref{superTeorem} is a little "impractical" from the point of view
%that all of them can accept the language only after exponentially long time - $2^m+n-1$ steps. The motivation for
%introduction of BC-complexity was to have a complexity measure for finite automata which is closer
%to reality than state complexity, so it would be interesting to think of an example with the same
%property, where some word can be accepted already after realistic (polynomial) time.
On the other hand, DFAs with a large state space that are obtained in many standard operations
(determinization of NFA, language operations), have a "good" structure so
that their BC-complexity can be relatively small.

\nocite{*}
\bibliographystyle{eptcs}
\bibliography{MarisValdatsAFL}

\end{document}